\documentclass[a4paper,11pt]{article}
\usepackage{amsmath}
\usepackage{amsthm}
\usepackage{amssymb}
\usepackage{amsfonts}
\usepackage{amscd}
\usepackage[latin1]{inputenc}


\theoremstyle{theorem}
\newtheorem{theorem}{Theorem}

\newtheorem{lemma}[theorem]{Lemma}

\theoremstyle{definition}
\newtheorem{definition}[theorem]{Definition}

\theoremstyle{remark}
\newtheorem{remark}[theorem]{Remark}

\newcommand{\al}{\alpha}
\newcommand{\be}{\beta}
\newcommand{\de}{\delta}
\newcommand{\ep}{\epsilon}

\newcommand{\ga}{\gamma}

\newcommand{\la}{\lambda}
\newcommand{\om}{\omega}

\newcommand{\vp}{\varphi}

\newcommand\Om\Omega
\newcommand\Te\Theta
%
\newcommand{\De}{\Delta}
\newcommand{\Ga}{\Gamma}
\newcommand{\La}{\Lambda}
\newcommand{\Si}{\Sigma}
%

\newcommand{\bS}{\Sigma}


%

%

\newcommand{\tmu}{\tilde{\mu}}

%

%

%
\def\CC{\mathbb{C}}
\def\NN{\mathbb{N}}
\def\RR{\mathbb{R}}

%

\newcommand{\cD}{{\mathcal D}}

\newcommand{\cM}{{\mathcal M}}

\newcommand{\cO}{{\mathcal O}}

\newcommand{\cR}{{\mathcal R}}

\newcommand{\cT}{{\mathcal T}}

%

%
\def\Bx{\bar x}

\def\BM{\,\overline{\!M}{}}

%

\newcommand{\pd}{\partial}
\newcommand\minus\backslash

\newcommand\lan\langle
\newcommand\ran\rangle

%


\newcommand{\I}{{\mathrm i}}

\newcommand{\dd}{{\mathrm d}}
%

%
\addtolength{\parskip}{3pt}

%

%
 
    \DeclareMathOperator\Real{Re}

\renewcommand\SS{\mathbb S}

\newcommand\hf{\hat f}

\renewcommand\geq\geqslant
\renewcommand\leq\leqslant

\newcommand\hcM{\widehat\cM}
\DeclareMathOperator{\diam}{diam}

\newcommand\sk{\sqrt k}

\newcommand\hT{\widehat T}

\newcommand\tW{\widetilde W}
\def\Bmu{{\bar\mu}}
\def\Bla{{\bar\la}}
\def\BcM{{\bar{\cM}}}
\def\Btheta{{\bar\vartheta}}
\def\Bw{{w_{\BM}}}
\def\Br{{\bar r}}
\def\cMo{\cM^{(0)}}
\def\tcR{\widetilde{\cR}}
%


\parskip=0.5ex
\oddsidemargin= 0.35cm
\evensidemargin= 0.35cm

\parindent=1.5em
\textheight=23.0cm
\textwidth=15.5cm
\topmargin=-1.0cm

\title{\Large\bf Spinor Green's functions via spherical means\\ on products of space forms}
\author{Alberto Enciso$^a$\thanks{alberto.enciso@math.ethz.ch}\;
  and Niky Kamran$^b$\thanks{nkamran@math.mcgill.ca}}
\date{\small $^a$ Departement Mathematik, ETH Zürich, 8092 Zürich, Switzerland.\vspace{1ex}\\
$^b$ Department of Mathematics and Statistics, McGill University, Montréal, Québec,\\ Canada H3A 2K6.}

\begin{document}
\maketitle

\begin{abstract}
  We explicitly compute the Green's function of the spinor Klein--Gordon equation on the Riemannian and Lorentzian manifolds of the form $M_0×\cdots× M_N$, with each factor being a space of constant
  sectional curvature. Our approach is based on an extension of the method of spherical means to the case of spinor fields and on the use of Riesz distributions.
\end{abstract}

\section{Introduction}

Our objective in this paper is to compute the Green's function for
the spinor Klein-Gordon operator on all Riemannian and Lorentzian
product manifolds of the form $M_0×\cdots× M_N$, where each factor
is a simply connected Riemannian or Lorentzian manifold of constant
sectional curvature. For $N=1$, this class of manifolds includes the
de Sitter and anti-de Sitter spaces, which are conformally flat. In contrast, for $N\geq 2$,  
these manifolds are not conformally flat. They comprise for example 
the Robinson--Bertotti solutions of the Einstein--Maxwell equations, which are of Petrov type D.

To the best of our knowledge, the Green's function for the Dirac equation
on de Sitter and anti-de Sitter spaces was first computed at the
formal level by Allen and Lütken~\cite{Al86} and Mück~\cite{JPA}.
These results have been used in a number of recent cosmological and
field-theoretic applications,
e.g.~\cite{AL03,Co05,BU06,FSSY06,Gr09,Ko09}. Camporesi has also
studied the spinor heat kernel in these spaces~\cite{Ca92}, and
Inagaki, Ishikawa and Muta~\cite{IIM96} have considered the Dirac
equation on the Einstein universe $\RR^1_1×\SS^{n-1}$.

The case $N\geq 2$ treated in this paper differs in significant ways
from the conformally flat case $N=1$ and requires notably a more
elaborate treatment of the underlying spectral theory. Our approach
is based on an extension to the case of spinor fields of the
classical method of spherical means, which is a fundamental tool in
the analysis of scalar wave equations~\cite{He84}. In the Riemannian
case, this leads to an integral representation for the unique
square-integrable solution of the inhomogeneous spinor Klein-Gordon
equation in the class of manifolds under consideration. In the
Lorentzian case, the method of spherical means adapted to spinor
fields leads naturally to the use of Riesz transforms, which are
very well suited for the analysis of linear wave equations. In
addition, the use of spherical means avoids the necessity of
starting with an explicit ansatz for the action of the Dirac
operator on the Green's function, in contrast to the earlier works
cited above for the case of de Sitter and anti-de Sitter space. The
outcome of the calculation is an exact integral representation for
the unique advanced Green's function for the spinor Klein-Gordon
operator in the manifolds considered in this paper.

We should point out that we have recently used the  
method of spherical means to compute the Green's function for the Hodge
Laplacian on differential forms on a class of symmetric spaces which
is a special subclass of the one considered in the present
paper~\cite{EK09}, namely products of Riemannian surfaces of
constant curvature with de Sitter or anti-de Sitter space. For the
case of Hodge Laplacians, the method of spherical means does not
extend beyond the case of two factors in the product structure, with
one factor being a Riemannian surface. We shall see in the present
paper that in contrast with the Hodge Laplacian case, there is no
such restriction for the spinor Klein--Gordon equation and that the
method of spherical means applies to the general product structure.

We have chosen to analyze the spinor Klein--Gordon (or Laplace)
equation~\cite{Ma67} rather than the Dirac equation because the
former presents a remarkable intertwining property with spherical
means that allows us to compute the Green's function not only for
spaces of constant curvature, but also for arbitrary products
thereof. This property, which is presented in Lemmas~\ref{L.DT}
and~\ref{L.DT2}, is made manifest by computing the action of the
spinor Laplacian on spherical means. It is not shared by the Dirac
operator or the Laplacian on forms, which accounts for the fact
mentioned earlier that spherical means can be used to compute the
differential form Green's function on product spaces only when there
is at most one factor of dimension higher
than two. It should be mentioned, however, that the approach
developed in this paper can be readily used to rigorously compute
the Green's function of the Dirac equation on space forms as well.

The paper is organized as follows. In Section~\ref{S.means}, we
introduce two spherical means operators on spaces of constant curvature
and derive a manageable formula for the action of the Dirac operator on
spherically averaged spinor fields on these spaces in terms of a ``radial''
ordinary differential operator. The spectral resolution of this radial operator
plays a key role in our approach and is computed in Section~\ref{S.radial}.
In Section~\ref{S.Riem} we consider the Riemannian case, computing the minimal
spinor Green's function in closed form. The Lorentzian case is dealt with in Section~\ref{S.Lor},
where we explicitly compute the advanced (or retarded) Green's function.

\section{Spherical means for spinors}
\label{S.means}

\subsection{Riemannian spherical means}
\label{SS.Riem}

Throughout this subsection, we shall denote by
$M$ the simply connected Riemannian manifold of dimension $n$ and
constant sectional curvature $k$. It is well known that this manifold admits a unique spin structure. The space of $C^\infty$ spinor fields on $M$ will be denoted by $\bS(M)$, with the fiber of the spinor bundle at a point $x\in M$ being $\bS_x(M)$. Clifford multiplication will be simply denoted by juxtaposition, which allows us to write the action of the Dirac operator in local coordinates $x^i$ as
\[
D_M\psi:=g^{ij}e_i\nabla_j\psi\,.
\]
Here $g$ is the metric in $M$, $\nabla$ is the covariant derivative operator and $e_i$ is a convenient shorthand notation for the vector field $\pd/\pd x^i$.

Let us denote by $\rho_M(x,x')$ the Riemannian distance between two
points $x,x'\in M$. We recall that the injectivity radius of $M$
coincides with its diameter $\diam M$, which is $+\infty$ if $k\leq0$ and
$\pi/\sqrt{k}$ if $k>0$. If $\rho_M(x,x')$ is smaller that the
injectivity radius of $M$, there exists a unique minimal geodesic
$\ga_{x,x'}:[0,1]\to M$ with $\ga_{x,x'}(0)=x'$ and
$\ga_{x,x'}(1)=x$. In this case, we denote by $\nu(x,x')\in T_xM$ the
unit tangent vector to this curve at $x$, which can also be defined as
$\nu(x,x'):=\nabla \rho_M(x,x')$, i.e., the gradient of
$\rho_M(x,x')$ computed with respect to its first variable. The spinor
parallel transport operator from $x'$ to $x$ is the map
$\La(x,x'):\bS_{x'}(M)\to \bS_x(M)$ defined by
$\La(x,x')\psi:=\Psi(1)$, where $\Psi(t)$ is the unique
solution of the parallel transport equation along $\ga_{x,x'}$, which
can be written (with a slight abuse of notation) as
\[
\nabla_{\dot\ga_{x,x'}(t)}\Psi(t)=0\,,\qquad \Psi(0)=\psi\,.
\]
$\La(x,x')$ is clearly well defined, and smoothly dependent on $x$ and $x'$, whenever $x'$ does not belong to the cut locus of $x$.

Our goal for this section is to introduce appropriate spherical means operators on spinors and prove an intertwining relation which allows to exchange the action of the square Dirac operator $D_M^2$ for that of ordinary differential operators. To do so, we start by introducing the notation $B_M(x,r)$ for the ball of center $x$ and radius $r$ and setting $S_M(x,r):=\pd B_M(x,r)$.
It is well known that
$S_M(x,r)$ is diffeomorphic to a sphere for any $0<r<\diam M$ and
that these spheres foliate $M\minus\{x\}$ (and minus the antipodal point
of $x$ if $M$ is positively curved). As the area of $S_M(x,r)$ is
\begin{equation}\label{area}
m_M(r):=\big|\SS^{n-1}\big|\Bigg(\frac{\sin \sk\, r}{\sk}\Bigg)^{n-1}\,,
\end{equation}
we will denote by $\dd\mu_M(r):=m_M(r)\,\dd r$ the radial measure on $M$. The determination of the square root we shall use has non-negative real part, with a branch cut on the negative real axis, and is holomorphic in
$\CC\minus(-\infty,0]$. If $k$ is negative, we take the square root $\sqrt k\in\I\RR^+$.

\begin{definition}\label{D.Riem}
Let $0<r<\diam M$. The (Riemannian) {\em spherical means} of a
smooth spinor field  $\psi \in \bS(M)$ on the sphere of radius $r$ are
defined as the integrals
\begin{align*}
\cM_r\psi(x)&:=\frac{1}{m_M(r)}\int_{S_M(x,r)}\La(x,x')\psi(x')\,\dd
S(x')\,,\\
\hcM_r\psi(x)&:=\frac{1}{m_M(r)}\int_{S_M(x,r)}\nu(x,x')\,\La(x,x')\psi(x')\,\dd
S(x')\,,
\end{align*}
with $\dd S$ standing for the induced hypersurface measure.
\end{definition}

The crucial intertwining property of the spinor spherical means operators in $M$ with the spinor Laplacian is presented in Lemma~\ref{L.DT} below. In order to state the result we introduce the differential operators
\begin{subequations}\label{TtT}
  \begin{align}\label{T}
    T_Mf(r)&:=f''(r)+(n-1)\sk\,\cot\big(\sk r\big)\,f'(r)-\frac{k(n-1)}4\bigg( \sec^2\frac{\sk r}2+n-1\bigg)f(r)\,,\\
 \hT_Mf(r)&:=f''(r)+(n-1)\sk\,\cot\big(\sk r\big)\,f'(r)-\frac{k(n-1)}4\bigg( \csc^2\frac{\sk r}2+n-1\bigg)f(r)\,,
\end{align}
\end{subequations}
and recall the well known formulas for the covariant derivatives of $\nu(x,x')$ and $\La(x,x')$ (cf.\ e.g.~\cite{Al86,PRD,JPA})
\begin{equation}\label{ders}
  \nabla_i \nu_j=\sk\, \cot\big(\sk\rho_M\big)\,(g_{ij}-\nu_i\nu_j)\,,\qquad \nabla_i\La=-\frac{\sk}2\tan\frac{\sk \rho_M}2\,(e_i\nu-\nu_i)\La\,.
\end{equation}
For the ease of notation, we omit the argument $(x,x')$ in $\nu$, $\rho_M$ and $\La$.

\begin{lemma}\label{L.DT}
  For any $f\in C^\infty_0((0,\diam M))$ and $\psi\in\bS(M)$,
  \[
  D_M^2\int_0^\infty \big(f(r)\,\cM_r\psi+\hf(r)\,\cM_r\psi\big)\,\dd\mu_M(r)= \int_0^\infty \big(T_Mf(r)\,\cM_r\psi+\hT_M\hf(r)\,\cM_r\psi\big)\,\dd\mu_M(r)
  \]
\end{lemma}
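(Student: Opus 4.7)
The strategy is to realize both integrals on the left-hand side as volume integrals over $M$ by undoing the polar coordinate decomposition centered at $x$. Since $\dd\mu_M(r) = m_M(r)\,\dd r$, polar coordinates yield
\[
\int_0^\infty f(r)\,\cM_r\psi(x)\,\dd\mu_M(r) = \int_M f(\rho_M(x,x'))\,\La(x,x')\psi(x')\,\dd V(x'),
\]
and the analogous identity for the $\hcM_r$ term produces the kernel $\hf(\rho_M)\,\nu(x,x')\La(x,x')$. Because $f$ and $\hf$ are compactly supported in $(0,\diam M)$ and thus stay away from the cut locus of $x$, both integrands are smooth in $x$ and $D_M^2$ can be brought under the integral sign. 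The problem is therefore reduced to evaluating the pointwise action of $D_M^2$, acting in the first variable, on the two kernels
\[
K_1(x,x') := f(\rho_M)\,\La, \qquad K_2(x,x') := \hf(\rho_M)\,\nu\La.
\]

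Next, I would apply $D_M$ once to each kernel, using \eqref{ders} together with the Clifford relations $e_ie_j + e_je_i = -2g_{ij}$ (so $\nu^2 = -1$). Acting with $\nabla_i$ on $K_1$ and contracting with $g^{ij}e_j$ produces a term $f'(\rho_M)\,\nu\La$ from the radial derivative, plus a term coming from $\nabla_i\La$ that, after the Clifford contraction $g^{ij}e_i(e_j\nu - \nu_j)$, collapses to a scalar-valued expression of the form $\alpha(r)\La + \beta(r)\,\nu\La$. An analogous computation for $K_2$, which additionally requires $\nabla_i\nu_j = \sk\cot(\sk\rho_M)(g_{ij}-\nu_i\nu_j)$ to differentiate the extra factor $\nu$, gives $D_MK_2$ as a similar linear combination of $\La$ and $\nu\La$ with explicit radial coefficients built out of $\hf, \hf'$ and trigonometric functions of $\sk\rho_M$.

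Applying $D_M$ once more and using \eqref{ders} again produces a second-order radial differential expression acting on $f$ (resp.~$\hf$), each multiplying the two spinor structures $\La$ and $\nu\La$. The plan is then to show that the $\La$-coefficient of $D_M^2 K_1$ reduces exactly to $T_Mf(\rho_M)$ and the $\nu\La$-coefficient of $D_M^2 K_2$ to $\hT_M\hf(\rho_M)$, while the remaining off-diagonal coefficients either vanish individually or cancel when the two kernels are summed as in the statement of the lemma. The transition $\sec^2 \leftrightarrow \csc^2$ between $T_M$ and $\hT_M$ should arise from the exchange $\tan(\sk\rho_M/2)\leftrightarrow\cot(\sk\rho_M/2)$ induced by the extra factor $\nu$ in $K_2$, combined with the identity $1+\tan^2 u = \sec^2 u$; the common zero-order constant $\frac{k(n-1)^2}{4}$ can be traced back to the scalar-curvature contribution $\tfrac14 R = \tfrac14 n(n-1)k$ of the Lichnerowicz formula $D_M^2 = \nabla^*\nabla + \tfrac14 R$. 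Finally, undoing the coarea formula rewrites the $\La$- and $\nu\La$-contributions as $\cM_r(T_Mf)$ and $\hcM_r(\hT_M\hf)$ integrated against $\dd\mu_M(r)$, completing the proof.

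The main obstacle is the algebraic bookkeeping at the second order: products such as $e_i\nu e_j\nu$ and $e_ie_j\nu$ must be reduced to scalar multiples of $\La$ or $\nu\La$ by repeated use of the Clifford identities together with $\nu^2 = -1$ and $\nu^i\nu_i = 1$, and the trigonometric terms produced by differentiating $\tan(\sk\rho_M/2)$ and $\cot(\sk\rho_M)$ must be combined carefully to recover the clean shapes of $T_M$ and $\hT_M$. The cleanest way to organize the calculation is probably to first prove a factored identity for $D_M$ that exchanges the ``scalar'' and ``vector'' kernels $\phi(\rho_M)\La$ and $\phi(\rho_M)\nu\La$ with first-order radial operators, and then apply it twice; this reduces the amount of Clifford gymnastics required at any single step and makes the cancellation of cross-terms transparent.
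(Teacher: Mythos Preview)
Your approach is essentially the one the paper uses: rewrite the spherical-means integrals as volume integrals over $M$, differentiate under the integral sign, and compute $D_M$ twice on the kernels $f(\rho_M)\La$ and $\hf(\rho_M)\nu\La$ using~\eqref{ders}; the paper in fact carries out precisely the ``factored'' computation you describe at the end, showing that a single application of $D_M$ maps $f(\rho_M)\La$ to a multiple of $\nu\La$ and $\hf(\rho_M)\nu\La$ to a multiple of $\La$, so the off-diagonal terms vanish \emph{individually} (not merely upon summation). One small caveat: the paper works with the Clifford convention $e_ie_j+e_je_i=2g_{ij}$, hence $\nu\nu=1$, whereas you wrote $e_ie_j+e_je_i=-2g_{ij}$ and $\nu^2=-1$; you will need to adopt the paper's sign to recover exactly the operators $T_M$ and $\hT_M$ as written, but this does not affect the strategy.
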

\begin{proof}
  Let us set
  \begin{align*}
    \Psi(x)&:=\int_0^\infty\big(f(r)\,\cM_r\psi(x)+\hf(r)\,\cM_r\psi(x)\big)\,\dd\mu_M(r)\\
    &\phantom{:}=\int\big(f(\rho_M(x,x'))\,\La(x,x')\psi(x')+\hf(\rho_M(x,x'))\,\nu(x,x')\,\La(x,x')\psi(x')\big)\,\dd x'\,,
  \end{align*}
  where $\dd x'$ stands for the Riemannian volume measure on $M$. Omitting the dependence on $(x,x')$ in the rest of the proof, a short computation using~\eqref{ders} and the facts that $e_ie_j+e_je_i=2g_{ij}$ and $\nu\nu=1$ gives
  \begin{subequations}\label{ders2}
  \begin{align}
    D_M\nu&=g^{kl}e_k\nabla_l\nu_je_ig^{ij}=\sk\, \cot\big(\sk\rho_M\big) \,(g^{ij}e_ie_j -\nu\nu)= (n-1)\,\sk\, \cot\big(\sk\rho_M\big)\,,\\
    g^{ij}e_i\nu\nabla_j\La&=-\frac{\sk}2\tan\frac{\sk \rho_M}2\,g^{ij}(2\nu_i-\nu e_i)(e_j\nu-\nu_j)\La= (n-1)\,\frac{\sk}2\tan\frac{\sk \rho_M}2\,.
  \end{align}
\end{subequations}
We now make use of some well known formulas for the covariant derivatives of $\nu(x,x')$ and $\La(x,x')$ (cf.\ e.g.~\cite{Al86,PRD,JPA})
\begin{equation}\label{ders}
  \nabla_i \nu_j=\sk\, \cot\big(\sk\rho_M\big)\,(g_{ij}-\nu_i\nu_j)\,,\qquad \nabla_i\La=-\frac{\sk}2\tan\frac{\sk \rho_M}2\,(e_i\nu-\nu_i)\La\,.
\end{equation}
For the ease of notation, we omit the argument $(x,x')$ in $\nu$, $\rho_M$ and $\La$.

Using the formulas~\eqref{ders2} and~\eqref{ders}, the action of the Dirac operator on $\Psi$ is readily found to be
\begin{align*}
  D_M\Psi&=\int\big(D_M(f(\rho_M)\La\psi)+D_M(\hf\nu\La\psi)\big)\,\dd x'\\
  &=\int\big(f'(\rho_M)\nu\La\psi+f(\rho_M)D_M\La\psi+\hf'(\rho_M)\La\psi+\hf(\rho_M)(D_M\nu)\La\psi +\hf(\rho_M) g^{ij}e_i\nu\nabla_j\La\psi\big)\,\dd x'\\
  &=\int\big(f'(\rho_M)-(n-1)\,\frac{\sk}2\tan\frac{\sk \rho_M}2\,f(\rho_M)\big)\,\nu\La\psi\,\dd x' \\
  &\qquad\qquad\qquad+ \int\bigg(\hf'(\rho_M)+(n-1)\,\sk\,\bigg( \cot\big(\sk\rho_M\big)\,+\frac12\tan\frac{\sk \rho_M}2\bigg)\,f(\rho_M)\bigg)\,\nu\La\psi\,\dd x'\,.
\end{align*}
Acting with $D_M$ on both sides of this equation and arguing as above we readily find
\[
  D_M^2\Psi=\int\big( T_Mf(\rho_M)\La\psi+\hT_M\hf(\rho_M)\nu\La\psi\big)\,\dd x'=\int \big(T_Mf(r)\,\cM_r\psi+\hT_M\hf(r)\,\cM_r\psi\big)\,\dd\mu_M(r)\,,
\]
as claimed.
\end{proof}

\subsection{Lorentzian spherical means}
\label{SS.Lor}

We now show how the above results must be modified when the underlying manifold is Lorentzian. To begin with, let $M$ now be the simply connected Lorentzian manifold of dimension $n$ and constant sectional curvature $k$. The Minkowski ($k=0$) and de Sitter spaces ($k>0$)
are globally hyperbolic. Anti-de Sitter space ($k<0$) is not, although it is strongly causal. We shall therefore restrict ourselves to a domain $\Om\subset M$ which will be chosen be geodesically normal  (that is a normal neighborhood of each of its points) when $k<0$, while when $k\geq0$, $\Om$ can be taken to be the whole space $M$.

We use the standard notation $J^+_\Om(x)$ for the set of points in $\Om$
causally connected with $x$, that is  the set of points $x'\in\Om$ such that
there exists a future directed, non-spacelike curve from $x$ to
$x'$. Likewise, we set
\begin{align*}
S^+_\Om(x,r):=\big\{x'\in J_\Om^+(x):\rho_M(x,x')=r\big\}
\end{align*}
and denote by $\dd S$ the induced hypersurface measure. Here $\rho_M(x,x')$ stands for the distance between two points $x$ and $x'\in J_\Om^+(x)$. It is well known that $J_\Om^+(x)$ is foliated by the level sets of $\rho_M$,
\begin{equation*}
J^+_\Om(x)=\bigcup_{r\geq0}S_\Om^+(x,r)\,,\qquad \text{with }\; S^+_\Om(x,r)\cap S^+_\Om(x,r')=\emptyset\;\text{ if }\;r\neq r'\,.
\end{equation*}
In the Lorentzian case we still define $m_M$ as in~\eqref{area} and set $\dd\mu_M(r):=m_M(r)\,\dd r$. Notice that $m_M(r)$ is not the area of $S^+_\Om(r)$, which can be infinite. If $x\in\Om$ and $x'\in J^+_\Om(x)$
does not lie on the causal cut locus of $x$, the parallel transport
operator $\La(x,x'):\bS_{x'}(\Om)\to \bS_x(\Om)$ along the unique
future-directed minimal geodesic from $x$ to $x'$ is again well defined. We also keep the notation $\nu(x,x')\in T_x\Om$ for the unit tangent vector at $x$ of the aforementioned minimal geodesic.

\begin{definition}\label{D:LSM}
Let $\psi\in \bS(\Om)$ be compactly supported. Its (Lorentzian) {\em spherical means} of radius $r$ are defined as
\begin{align*}
\cM_r^\Om\om(x)&:=\frac{1}{m_M(r)}\int_{S^+_\Om(x,r)}\La(x,x')\psi(x')\,\dd
S(x')\,,\\
\hcM_r^\Om\om(x)&:=\frac{1}{m_M(r)}\int_{S^+_\Om(x,r)}\nu(x,x')\,\La(x,x')\psi(x')\,\dd
S(x')\,.
\end{align*}
\end{definition}

In what follows, we shall omit the superscript $\Om$ for simplicity of notation. The main property of the spherical means is the following analog of Lemma~\ref{L.DT}. We omit its proof, for it follows from Eqs.~\eqref{ders} and~\eqref{ders2} (which, being of an essentially algebraic nature, remain valid in the context of Lorentzian manifolds) just as in the Riemannian case.

\begin{lemma}\label{L.DT2}
   For any $f\in C^\infty_0((0,\diam M))$ and any compactly supported $\psi\in\bS(\Om)$,
  \[
  D_M^2\int_0^\infty \big(f(r)\,\cM_r\psi+\hf(r)\,\cM_r\psi\big)\,\dd\mu_M(r)= \int_0^\infty \big(T_Mf(r)\,\cM_r\psi+\hT_M\hf(r)\,\cM_r\psi\big)\,\dd\mu_M(r)
  \]
\end{lemma}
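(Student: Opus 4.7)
The plan is to adapt the proof of Lemma~\ref{L.DT} \emph{mutatis mutandis}. The crucial observation, already flagged in the text preceding the statement, is that both the covariant derivative formulas~\eqref{ders} and the derived Clifford identities~\eqref{ders2} are local pointwise relations that depend only on the Clifford algebra law $e_ie_j+e_je_i=2g_{ij}$ and on the geometry of radial geodesics emanating from a point; these ingredients carry over verbatim to Lorentzian signature.

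Concretely, I would set $\Psi(x):=\int_0^\infty\bigl(f(r)\cM_r\psi(x)+\hf(r)\hcM_r\psi(x)\bigr)\,\dd\mu_M(r)$ and unfold this as the iterated integral
\begin{equation*}
\Psi(x)=\int_0^\infty\int_{S^+_\Om(x,r)}\bigl(f(r)\La(x,x')+\hf(r)\nu(x,x')\La(x,x')\bigr)\psi(x')\,\dd S(x')\,\dd r\,.
\end{equation*}
In contrast with the Riemannian case, the measure $\dd S\,\dd r$ does \emph{not} combine into the Lorentzian volume form on $J^+_\Om(x)$; however, this is immaterial, since for each fixed $x$ the integrand is supported on a compact subset of $J^+_\Om(x)$ by compactness of $\supp\psi$ and of $\supp f\cup\supp\hf$. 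Parameterizing $S^+_\Om(x,r)$ through the exponential map based at $x$ gives a smooth description valid uniformly for $x$ in a small neighborhood, which legitimizes differentiating under the integral. Applying $D_M$ to the integrand via~\eqref{ders} and simplifying the resulting Clifford products by means of~\eqref{ders2} reproduces, term by term, the intermediate expressions appearing in the proof of Lemma~\ref{L.DT}. A second application of $D_M$ then converts $f$ and $\hf$ into $T_Mf$ and $\hT_M\hf$, yielding the claimed identity after re-folding the double integral back into the spherical means notation.

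The main obstacle is technical rather than conceptual: justifying differentiation under the integral in spite of the $x$-dependence of the integration domain $S^+_\Om(x,r)$. This is precisely where the standing hypotheses enter the picture -- that $\Om$ is a normal neighborhood of each of its points when $k<0$, or that $M$ itself is globally hyperbolic when $k\geq 0$ -- together with the compact support of $\psi$, which guarantees that the $x'$-integration is effectively over a fixed compact subset of $\Om$. Once this is settled, the proof is algebraically identical to the Riemannian one; no signature-specific sign subtleties obstruct the argument because the sign of $g(\nu,\nu)$ has already been absorbed in the conventions underlying~\eqref{ders2}.
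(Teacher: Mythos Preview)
Your proposal is correct and follows exactly the route the paper takes: the paper omits the proof entirely, noting only that the algebraic identities~\eqref{ders} and~\eqref{ders2} carry over verbatim to Lorentzian signature, so the computation is identical to that of Lemma~\ref{L.DT}. One minor inaccuracy worth correcting: contrary to your parenthetical remark, $\dd S(x')\,\dd r$ \emph{does} combine into the Lorentzian volume element on $J^+_\Om(x)$ (this is the coarea formula, with $|g(\nu,\nu)|=1$), so the unfolded integral can in fact be written as $\int_{J^+_\Om(x)}(\,\cdots\,)\,\dd x'$ exactly as in the Riemannian proof; your alternative justification via compact support is nonetheless valid.
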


\section{The radial operator}
\label{S.radial}

In this section we shall compute the spectral resolution of the radial operator~\eqref{T}, which will play a key role in the rest of the paper. For concreteness, we shall assume that $k$ is nonzero, the flat case being obviously easier to handle.

As the differential operator $T_M$ is nonnegative and symmetric in the domain $C^\infty_0((0,\diam M))$, we will denote by $\cT_M$ its Friedrichs extension. The spectral projection of $\cT_M$ is most easily presented in terms of the Borel measure $\dd\vartheta_M$ on $\RR^+$ defined as
\[
\dd\vartheta_M(\la):=\begin{cases}
  \displaystyle\bigg(\sum\limits_{j=0}^\infty \frac{k^{n/2}(2j+n)\,j!\,(j+n-1)!}{2^{n+1}|\SS^{n-1}|\,\Ga(j+\frac n2)\,\Ga(j+\frac n2+1)}\de\big(\la-k(\tfrac n2+j)^2\big)\bigg)\,\dd\la &\text{if }\; k>0\,,\\[5mm]
 \left|\dfrac{\Ga\big(\frac n2+\sqrt{\frac\la k}\big)\,\Ga\big( \sqrt{\frac\la k}\big)}{\Ga(\frac n2)\,\Ga\big(2\sqrt{\frac\la k}\big)}\right|^2
  \dfrac{(-k)^{(n-1)/2}\,\dd\la}{ 2^n|\SS^{n-1}|\sqrt\la }  &\text{if }\; k<0\,.
\end{cases}
\]
In addition to this, let us denote by $w_M:(0,\diam M)× \RR^+\to\RR$ the function defined $(\mu_M×\vartheta_M)$-almost everywhere by requiring that
\[
w_M\bigg(r,\Big(\frac n2+j\Big)^2\bigg):=\cos\frac{\sk r}2\, P^{(\frac n2-1,\frac n2)}_j\big(\cos\sk r\big)
\]
for all $j\in \NN$ if $k>0$ and
\[
w_M(r,\la):=\cos \frac{\sk r}2\, F\bigg(\frac n2+\sqrt{\frac\la k},\frac n2-\sqrt{\frac\la k},\frac n2;\sin^{2}\frac{\sk r}2\bigg)
\]
if $k<0$. Here we are using the standard notation $P^{(\al,\be)}_j(z)$ and $F(\al,\be,\ga;z)$ for the Jacobi polynomials and the Gauss hypergeometric function.

All the information we shall need concerning the spectral properties of $\cT_M$ is summarized in the following

\begin{theorem}\label{T.TM}
With $\vartheta_M$ and $w_M$ defined as above, the following statements hold:
\begin{enumerate}
\item $w_M(\cdot,\la)$ is an analytic formal eigenfunction of $T_M$ with eigenvalue $-\la$  for $\vartheta_M$-almost every $\la\in \RR^+$
\item The map
\[
U_Mf(\la):=\int_0^{\diam M} f(r)\,{w_M(r,\la)}\,\dd\mu_M(r)
\]
defines a unitary transformation $L^2((0,\diam M),\dd\mu_M)\to L^2(\RR^+,\dd\vartheta_M)$ with inverse given by
\[
U_M^{-1}h(r):=\int_0^{\infty} w_M(r,\la)\,h(\la)\,\dd\vartheta_M(\la)\,..
\]

\item If $g$ is a measurable function and $f\in L^2((0,\diam M),\dd\mu_M)$ is such that $g\, U_Mf\in L^2(\RR^+,\dd\vartheta_M)$,  then
\begin{equation*}
g(-\cT_M)f(r)=\int_0^{\infty} g(\la)\,w_M(r,\la)\,U_Mf(\la)\,\dd\vartheta_M(\la)
\end{equation*}
in the sense of norm convergence.

\item If $f\in L^2((0,\diam),\dd\mu_M)\cap C^\infty([0,\diam M))$, then
\begin{equation}\label{unifconv}
f(r)=\int_0^{\infty} w_M(r,\la)\,U_Mf(\la)\,\dd\vartheta_M(\la)
\end{equation}
for all $r\in [0,\diam M)$.
\end{enumerate}
\end{theorem}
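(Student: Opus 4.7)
The plan is to treat $\cT_M$ as a self-adjoint Sturm--Liouville operator on the weighted space $L^2((0,\diam M),\dd\mu_M)$, reduce the eigenvalue problem to a hypergeometric ODE via an explicit change of variable, and then derive parts (ii)--(iv) from the Plancherel theorem for the corresponding classical transform---orthogonality of Jacobi polynomials when $k>0$, and the Jacobi function transform of Flensted-Jensen and Koornwinder when $k<0$. Observing that $m_M'/m_M=(n-1)\sk\cot(\sk r)$, one rewrites
\[
T_M f = \frac{1}{m_M(r)}\bigl(m_M(r)\,f'(r)\bigr)' - V(r)\,f(r),\qquad V(r)=\frac{k(n-1)}4\Bigl(\sec^2\tfrac{\sk r}2+n-1\Bigr),
\]
so $\cT_M$ is a Sturm--Liouville operator with weight $m_M$. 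Since $V$ has an inverse-square singularity at $r=0$ (and at $r=\pi/\sk$ when $k>0$), both endpoints are limit-point for $n\geq 2$, so the Friedrichs extension is the unique self-adjoint realization.

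For part (i), I substitute the ansatz $w(r)=\cos(\sk r/2)\,g(z)$ with $z=\sin^2(\sk r/2)$ into $T_M w+\la w=0$. A routine calculation reduces this to the Gauss hypergeometric equation
\[
z(1-z)g''(z)+\bigl(\tfrac n2-(n+1)z\bigr)g'(z)-\bigl(\tfrac n2+\sqrt{\la/k}\bigr)\bigl(\tfrac n2-\sqrt{\la/k}\bigr)g(z)=0,
\]
whose solution regular at $z=0$ is $g(z)=F\bigl(\tfrac n2+\sqrt{\la/k},\tfrac n2-\sqrt{\la/k},\tfrac n2;z\bigr)$, matching the definition of $w_M$ in the hyperbolic case. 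For $k>0$ and $\sqrt{\la/k}=\tfrac n2+j$ with $j\in\NN$, the hypergeometric series terminates and, via the standard identity relating $F(-j,j+\al+\be+1,\al+1;z)$ to $P_j^{(\al,\be)}(1-2z)$, collapses to the stated Jacobi polynomial form; analyticity in $r$ is then manifest.

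Part (ii) is where the two signs of $k$ require different classical Plancherel theorems. When $k>0$ the interval $(0,\pi/\sk)$ is compact and the spectrum of $\cT_M$ is purely discrete; requiring regularity at both endpoints singles out the eigenvalues $\la_j=k(\tfrac n2+j)^2$, and the eigenfunctions are complete because, after the gauge transformation $f\mapsto f/\cos(\sk r/2)$ and the substitution $z=\cos(\sk r)$, orthonormality of $\{w_M(\cdot,\la_j)\}$ in $L^2((0,\diam M),\dd\mu_M)$ becomes the classical orthogonality of $\{P_j^{(n/2-1,n/2)}\}$ in $L^2([-1,1],(1-z)^{n/2-1}(1+z)^{n/2}\,\dd z)$; the atomic weights $\vartheta_M(\{\la_j\})$ then come from the standard formula for the $L^2$-norm of the Jacobi polynomials. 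When $k<0$ the interval is $(0,\infty)$ with $r=\infty$ in the limit-point case and the spectrum is purely absolutely continuous on $[0,\infty)$; here $w_M(\cdot,\la)$ is a Jacobi function of Flensted-Jensen--Koornwinder type and $\dd\vartheta_M/\dd\la$ is the corresponding Plancherel measure. Either one quotes this transform directly, or one reconstructs the spectral measure via Weyl--Titchmarsh theory: the Kummer connection formulas give the large-$r$ asymptotics of two linearly independent hypergeometric solutions and hence an explicit Harish-Chandra $c$-function $c(\la)$, and $\dd\vartheta_M/\dd\la=(2\pi|c(\la)|^2)^{-1}$ simplifies via $\Ga$-function identities to the claimed density.

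Parts (iii) and (iv) follow from (ii) by standard functional calculus. Unitarity of $U_M$ combined with the intertwining $U_M\cT_M U_M^{-1}=\la\cdot$ yields (iii) through the Borel functional calculus. For (iv), applying (iii) with $g\equiv 1$ gives the expansion in $L^2$-norm, and the upgrade to pointwise identity on $[0,\diam M)$ for $f\in C^\infty\cap L^2$ follows from the fact that $\cT_M^N f\in L^2$ for every $N\geq 0$ (thanks to the smoothness of $f$ at $r=0$ and iterated application of $\cT_M$), so that $U_M f$ decays faster than any polynomial in $\la$ and the spectral expansion converges uniformly on compact subsets of $[0,\diam M)$. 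The principal obstacle is the hyperbolic case $k<0$: establishing unitarity of $U_M$ with the precise $\Ga$-function density stated requires either substantial input from the Jacobi function transform theory or a delicate Weyl--Titchmarsh analysis based on the Kummer connection formulas; once (ii) is secured, parts (i), (iii), (iv) are routine.
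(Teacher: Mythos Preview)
Your overall strategy---change of variable to $z=\sin^2(\sk r/2)$, reduction to the hypergeometric equation, Jacobi orthogonality for $k>0$, and Weyl--Titchmarsh/Kummer connection formulas for $k<0$---is exactly the route the paper takes, and your identification of the spectral measure with the $c$-function is equivalent to the paper's computation via the reduced Wronskian and the Dunford--Schwartz formalism.

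There is, however, a factual error in your endpoint analysis: the potential $V(r)=\tfrac{k(n-1)}4\bigl(\sec^2(\sk r/2)+n-1\bigr)$ is \emph{bounded} at $r=0$ (since $\sec 0=1$); the singularity of the operator there comes from the weight $m_M(r)\sim r^{n-1}$, and for $n=2,3$ the endpoint $r=0$ is in the limit-circle case, not limit-point. The paper notes this and specifies the Friedrichs boundary condition $\lim_{z\to0}z^{n/2}w(z)=0$. This does not break your argument---you are already selecting the regular solution $w_M$---but the claim that the Friedrichs extension is the unique self-adjoint realization is false for $n=2,3$.

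The real gap is in part (iv). Your argument that $\cT_M^Nf\in L^2$ for all $N$ fails under the stated hypothesis $f\in C^\infty([0,\diam M))\cap L^2$: already one application of $T_M$ to a generic smooth $f$ produces a term $(n-1)\sk\cot(\sk r)f'(r)\sim (n-1)f'(0)/r$, which is not smooth at $r=0$ and, for $n=2$, not even in $L^2(r\,\dd r)$. The hypothesis does not force $f'(0)=0$ (nor the higher odd derivatives), so you cannot iterate. The paper handles this differently: convergence on $(0,\diam M)$ follows from a classical Sturm--Liouville result requiring only continuity, while convergence at $r=0$ is obtained by identifying $T_M$ with the radial part of a Schr\"odinger operator $\Delta_M-V(\rho_M(\cdot,x'))$ on the full manifold and invoking the Pinsky--Taylor theory of pointwise inversion for generalized spherical Fourier transforms. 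Some external input of this kind is genuinely needed here; the rapid-decay argument you sketch would require stronger compatibility conditions on $f$ at the origin than the theorem assumes.
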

\begin{proof}
We find it convenient to perform the analysis of the operator $\cT_M$ in terms of the variable $z:=\sin^2\frac{\sk r}2$, which ranges over $(0,1)$ when $k$ is positive and over $(-\infty,0)$ when $k$ is negative. If we denote this interval by $I_k$, the latter change of variables induces a unitary isomorphism $L^2((0,\diam M), \dd\mu_M)\to L^2(I_k,\dd\tmu_M)$, with
  \[
\dd\tmu_M(z):=C_M\big[z(1-z)\big]^{\frac n2-1}\,\dd z\,,\qquad C_M:=2^{n-1}k^{-n/2}|\SS^{n-1}|\,.
  \]
  In terms of this new variable, the ordinary differential equation $T_Mw=-\la w$ reads
  \begin{equation}\label{Tf}
z(1-z)\,w''(z)+ \Big(\frac n2-nz\Big)\,w'(z)-\bigg[\frac{n-1}4\Big(n-1+\frac1{1-z}\Big)-\frac\la k\bigg]\,w(z)=0\,.
\end{equation}
This is a Fuchsian differential equation is three regular singular points: $0$ (with characteristic exponents $0$ and $1-\frac n2$), $1$ (with exponents $\frac12$ and $\frac{1-n}2$) and $\infty$ (with exponents $\frac{1-n}2±\sqrt{\la/k}$).

From the knowledge of the characteristic exponents we infer that $T_M$ is essentially self-adjoint when $n\geq4$, the domain of its unique self-adjoint extension $\cT_M$ being
\begin{equation*}
\cD_M:=\big\{w\in C^1(I_k): w' \text{ is absolutely continuous and } T_Mw \in L^2(I_k,\dd\tmu_M)\big\}\,.
\end{equation*}
For $n=2,3$, $T_M$ is of limit-circle type at $z=0$, and its Friedrichs extension is defined on the domain~\cite{MZ00}
\begin{equation*}
\cD_M:=\Big\{w\in C^1(I_k): w' \text{ is absolutely continuous, }\lim_{z\to0}z^{n/2}w(z)=0, \text{  and } T_Mw \in L^2(I_k,\dd\tmu_M)\Big\}\,.
\end{equation*}

Let us begin with the case $k>0$. It is standard that $-\cT_M$ is a positive operator and has a discrete spectrum. Its eigenfunctions are the solutions $w$ of~\eqref{Tf} in $(0,1)$ which are bounded at both endpoints. This is well known to imply that
\[
w(z)=C\, (1-z)^{1/2}\,F\bigg(\frac{n}2+\sqrt{\frac\la k},\frac{n}2-\sqrt{\frac\la k},\frac n2;z\bigg)
\]
is a polynomial multiple of $(1-z)^{1/2}$, which is tantamount to demanding that $\la=(\frac n2+j)^2$ for some nonnegative integer $j$. Therefore,
\[
w(z)=C'\,(1-z)^{1/2}\,P_j^{(\frac n2-1,\frac n2)}(1-2z)
\]
for some normalization constant $C'$. The norm of $w$ is easily computed using that~\cite[22.2.1]{AS70}
\begin{align*}
  \int_0^1|w(z)|^2\,\dd\tmu_M(z)&=C_M|C'|^2\int_0^1z^{\frac n2-1}(1-z)^{n/2}\, P_j^{(\frac n2-1,\frac n2)}(1-2z)^2\,dz\\
  &=4C_M|C'|^2\frac{\Ga(j+\frac n2)\,\Ga(j+\frac n2+1)}{(2j+n)\,j!\,(j+n-1)!}\,.
\end{align*}
In the case of positive curvature, the claim follows immediately from the spectral theorem upon expressing the above formulas in the original variable $r$.

Let us now consider the case $k<0$. From the expression for the exponents at infinity we deduce that, for non-negative $\la$, the eigenvalue equation~\eqref{Tf} does not admit any square integrable solutions at infinity, so the spectrum of $\cT_M$ must be purely (absolutely) continuous. In order to locate the continuous spectrum of the operator, notice that Eq.~\eqref{Tf} can be written as
\[
-\frac1{\om(z)}\frac{\dd}{\dd z}\bigg(\frac{p(z)}{\om(z)}\frac{\dd w}{\dd z}\bigg) +q(z)\,w(z)=0\,,
\]
with
\begin{gather*}
  p(z)=[z(z-1)]^{n-1}\,,\qquad \om(z):=[z(z-1)]^{\frac n2-1}\,,\qquad q(z):=-\frac{n-1}4\bigg(n-1+\frac1{1-z}\bigg)\,.
\end{gather*}
Since the function
\[
Q(z):=q(z)+\frac14\bigg[\frac1{\om(z)}\frac{\dd}{\dd z}\bigg(\frac1{\om(z)}\frac{\dd p}{\dd z}\bigg)- \frac1{4\om(z)^2p(z)}\bigg(\frac{\dd p}{\dd z}\bigg)^2\bigg]= \frac{(n-1) (n+4 z-3)}{16 z(z-1) }
\]
tends to $0$ as $z\to-\infty$, it is a standard result~\cite[Theorem XIII.7.66]{DS88} that the spectrum of $-\cT_M$ is the half-line $[0,\infty)$.

In order to compute the spectral projector of $-\cT_M$, let us take $\la\in\CC\minus[0,\infty)$ with positive real part. Then the solutions of~\eqref{Tf} which are square-integrable in a neighborhood of $0$ and satisfy the boundary condition are proportional to
\[
w_0(z,\la):=(1-z)^{1/2}\, F\bigg(\frac{n}2+\sqrt{\frac\la k},\frac{n}2-\sqrt{\frac\la k},\frac n2;z\bigg)\,,
\]
while the ones that are integrable at infinity at proportional to
\[
w_±(z,\la):=(-z)^{-\frac n2±\sqrt{\frac\la k}}(1-z)^{1/2}\, F\bigg(-\frac n2±\sqrt{\frac\la k},1±\sqrt{\frac\la k},1± 2\sqrt{\frac\la k};\frac1z\bigg)
\]
when the imaginary part of $±\la$ is positive. From the asymptotic behavior these solutions, one can readily infer that the reduced Wronskian of $w_-$ and $w_+$, which is constant, is given by
\[
\tW(w_+(\cdot,\la),w_-(\cdot,\la)):=p(z)\bigg(\frac{\dd w_+}{\dd z}w_-(z)- w_+(z) \frac{\dd w_-}{\dd z}\bigg)=-2\sqrt{\frac\la k}\,.
\]

The above local solutions are connected by the formula~\cite[15.3.7]{AS70}
\begin{equation*}
  w_-(z,\la)=K_0(\la)\,w_0(z,\la)+K_+(\la)\,w_+(z,\la)\,,
\end{equation*}
with
\begin{align*}
  K_0(\la):=\frac{\Ga\big(\frac n2-\sqrt{\frac\la k}\big)\,\Ga\big(-\sqrt{\frac\la k}\big)}{\Ga\big(\frac n2\big)\, \Ga\big(-2\sqrt{\frac\la k}\big)}\,,\qquad K_+(\la):=-\frac{\Ga\big(2\sqrt{\frac\la k}\big)\,\Ga\big(\frac n2-\sqrt{\frac\la k}\big)\, \Ga\big(-\sqrt{\frac\la k}\big)}{\Ga\big(2\sqrt{\frac\la k}\big)\,\Ga\big( \frac n2+\sqrt{\frac\la k}\big)\, \Ga\big(\sqrt{\frac\la k}\big)}\,.
\end{align*}
Since $\sqrt{\la/k}$ is purely imaginary, it is not difficult to see that $K_0(\la)$ and $K_+(\la)$ are homolomorphic functions of $\la$ in the half-plane $\Real\la>0$. Setting
\[
U_Mw(\la):=\int w_0(z,\la)\,w(z)\,\dd\tmu_M(z)
\]
for all $w\in L^2(\RR^-,\dd\tmu_M)$ and realizing that $\dd\tmu_M(z)=|C_M\om(z)|\,\dd z$, we  deduce from the spectral theorem (cf.\ e.g.~\cite[p.\ 1524]{DS88}) that
\[
g(-\cT_M)w(z)=\int_0^\infty g(\la)\, w_0(z,\la)\,U_Mw(\la)\,\dd\vartheta_M(\la)
\]
whenever $g\, U_Mw\in L^2(\RR^+,\dd\vartheta_M)$,
the spectral measure being defined as
\begin{align*}
  \dd\vartheta_M(\la)&:=\frac1{2\pi \I}\frac{K_0(\la)^2}{K_+(\la)\,\tW(w_+(\cdot,\la),w_-(\cdot,\la))}\,\dd\la\,.
\end{align*}
By reverting to the original coordinate $r$ we derive the first three assertions about the spectral decomposition of $-\cT_M$.

Concerning the fourth assertion, we first notice that the validity
of~\eqref{unifconv} for $r\in (0,\diam M)$ is straightforward under
the assumption that $f$ is continuous~\cite[Theorem 6.1]{LS75}. The
fact that it also converges at $0$ is considerably more subtle and
requires higher differentiability assumptions on $f$. To in order to
prove it, it suffices to fix a point $x'\in M$ and notice that $T_M$
is related to the Schrödinger operator
\[
L_M:=\De_M-\frac{k(n-1)}4\bigg( \sec^2\frac{\sk \rho_M(\cdot,x')}2+n-1\bigg)
\]
via
\[
(T_M\vp)(\rho_M(\cdot,x'))=L_M\big(\vp(\rho_M(\cdot,x'))\big)\,.
\]
The convergence of~\eqref{unifconv} at $0$ then follows from the
results of of Pinsky and Taylor~\cite{Pi94,PT97} on the convergence
at $0$ of generalized spherical Fourier transforms.
\end{proof}

\section{The Riemannian case}
\label{S.Riem}

Let $M_a$ ($0\leq a\leq N$) be the simply connected Riemannian manifold of dimension $n_a$ and constant sectional curvature $k_a$, and consider the product space
\begin{equation}\label{BM}
\BM:=M_1×\cdots× M_N\,.
\end{equation}
Obviously the Dirac operator $D_{M_0× \BM}$ in $M_0× \BM$ can be identified with the operator $D_0\oplus D_1\oplus\cdots\oplus D_N$ acting on $\bS(M_0)\oplus \bS(M_1)\oplus\cdots \oplus\bS(M_N)$, where here and in what follows we use the subscript $a$ rather than $M_a$ for the ease of notation. For any ``mass'' parameter $m> 0$, our goal in this section is to solve the spinor equation
\begin{equation}\label{eqR}
  \big(m^2-D_{M_0× \BM}^2\big)\psi=\phi
\end{equation}
explicitly for compactly supported $\phi\in\bS(M_0× \BM)$. Notice that there exists precisely one $L^2$ solution $\psi$ of this equation because the operator $m^2-D_{M_0× \BM}^2$ has a compact inverse. To derive an explicit expression for $\psi$, we will rely on the intertwining relation presented in Lemma~\ref{L.DT} and on the spectral decomposition worked out in Theorem~\ref{T.TM}.

We will denote by $\cM^{(a)}_{r_a}$ the spinor spherical means operator in $M_a$ of radius $r_a$ and introduce the shorthand notation
\begin{subequations}\label{shorthand}
\begin{gather}
  \Br:=(r_1,\dots, r_N)\in \prod_{a=1}^N[0,\diam M_a]\,,\quad \Bla:=(\la_1,\dots,\la_N)\in (\RR^+)^N\,,\quad \Bx:=(x_1,\cdots, x_N)\in\BM\\
  \dd\Bmu(\Br):=\dd\mu_1(r_1)\cdots\dd\mu_N(r_N)\,,\quad \dd\Btheta(\Bla):=\dd\vartheta_1(\la_1)\cdots \dd\vartheta_N(\la_N)\,,\quad  |\Bla|:=\la_1+\cdots+\la_N\,,\\
  \BcM_\Br:=\cM^{(1)}_{r_1}\cdots \cM^{(N)}_{r_N}\,,\quad  \Bw(\Br,\Bla):=w_1(r_1,\la_1)\cdots w_N(r_N,\la_N)\,,\quad U_{\BM}:= U_1\cdots U_N\,.
\end{gather}
\end{subequations}
Before stating and proving the main result of this section, we establish a key lemma. In the statement of this lemma, we shall use the constant $C(n_0,k_0,\la_0)$ defined as
\[
C(n_0,k_0,\la_0):=\frac1{(n_0-2)|\SS^{n_0-1}|}\bigg(\frac{k_0}4\bigg)^{\frac{n_0}2-1} \frac{\Ga\big(\frac{n_0}2 +\sqrt{\frac{\la_0}{k_0}}\big)\, \Ga\big(\frac{n_0}2 -\sqrt{\frac{\la_0}{k_0}}\big)}{\Ga\big(\frac{n_0}2+1\big)\, \widetilde\Ga\big(\frac{n_0}2-1\big)}
\]
if $k_0>0$ and by
\[
C(n_0,k_0,\la_0):=\frac1{2(n_0-2)|\SS^{n_0-1}|}\bigg(\frac{k_0}4\bigg)^{\frac{n_0}2-1} \Bigg[\Real\frac{ \widetilde\Ga\big(\frac{n_0}2-1\big)\, \Ga\big(1+2\sqrt{\frac{\la_0}{k_0}}\big)}{ \Ga\big(\frac{n_0}2 +\sqrt{\frac{\la_0}{k_0}}\big)\, \Ga\big(\frac{1-n_0}2 +\sqrt{\frac{\la_0}{k_0}}\big)}\Bigg]^{-1}
\]
if $k_0<0$. Here
\[
\widetilde\Ga(t):=\begin{cases}
  \Ga(t)\quad &\text{if }\;t\neq0\,,\\
  2&\text{if }\;t=0\,.
\end{cases}
\]

\begin{lemma}\label{L.Green}
  For each $\la_0>0$ such that $\sqrt{\la_0/k_0}-{n_0}/2$ is not a positive integer, define the function
  \[
  g(r_0,\la_0):=C(n_0,k_0,\la_0)\,\cos\frac{\sqrt{k_0}r_0}2\, F\bigg(\frac{n_0}2+\sqrt{\frac{\la_0}{k_0}}, \frac{n_0}2-\sqrt{\frac{\la_0}{k_0}}, \frac{n_0}2+1;\cos^2\frac{\sqrt{k_0}r_0}2\bigg)
  \]
  if $k_0>0$ and
  \begin{multline*}
  \tilde g(r_0,\la_0):=
  2 C(n_0,k_0,\la_0)\,\cos\frac{\sqrt{k_0}r_0}2\\
  \Real\Bigg[ \bigg(\cos\frac{\sqrt{k_0}r_0}2\bigg)^{-n_0-2\sqrt{\frac{\la_0}{k_0}}} F\bigg(\frac{n_0}2+\sqrt{\frac{\la_0}{k_0}}, \sqrt{\frac{\la_0}{k_0}}1+2\sqrt{\frac{\la_0}{k_0}};\cos^{-2}\frac{\sqrt{k_0}r_0}2\bigg)\Bigg]
   \end{multline*}
   if $k_0<0$.  Then $f(\cdot,\la_0)$
  is a real solution of the equation $(\la_0-T_0)g(\cdot,\la_0)=0$ in $(0,\diam M_0]$ and has the asymptotic behavior
  \[
  g(r_0,\la_0)=\begin{cases}\dfrac1{(n_0-2)|\SS^{n_0-1}|\,r_0^{n_0-2}}+ \cO\big(r_0^{1-n_0}\big)\quad &\text{if }\; n_0\geq 3\,,\\[4mm]
    \dfrac1{2\pi}\log\dfrac1{r_0} & \text{if }\; n_0=2\,.
  \end{cases}
  \]
\end{lemma}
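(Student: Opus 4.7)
The plan is to verify three things in order: that $g$ (respectively $\tilde g$) satisfies $(\la_0-T_0)g=0$ on $(0,\diam M_0]$, that it is real-valued, and that it has the claimed leading singularity as $r_0\to 0^+$. The normalization constant $C(n_0,k_0,\la_0)$ is then pinned down by matching the singular coefficient. The whole argument reduces, after a change of variables, to standard manipulations with the Gauss hypergeometric function.

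First, following the proof of Theorem~\ref{T.TM}, I switch to $z=\sin^2(\sqrt{k_0}\,r_0/2)$, which turns the eigenvalue equation into the Fuchsian equation~\eqref{Tf}. Writing $u:=1-z=\cos^2(\sqrt{k_0}\,r_0/2)$ and substituting the Frobenius ansatz $g=u^{1/2}F(u)$ (absorbing the characteristic exponent $1/2$ at $r_0=0$), a direct computation reduces the equation to the standard Gauss hypergeometric equation for $F$ with parameters $a=n_0/2+\sqrt{\la_0/k_0}$, $b=n_0/2-\sqrt{\la_0/k_0}$, $c=n_0/2+1$. When $k_0>0$ I simply take $F=F(a,b,c;u)$, yielding the stated formula for $g$; reality is automatic since $a,b,c$ are real. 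When $k_0<0$, $\sqrt{\la_0/k_0}$ is purely imaginary and the two complex-conjugate Kummer solutions at $u=\infty$, of the form $u^{-a}F(a,1-c+a,1+a-b;1/u)$ and its conjugate, span the solution space; taking their real part (the $\Real[\cdots]$ in the definition of $\tilde g$) produces a real-valued solution.

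For the asymptotic as $r_0\to 0$, which corresponds to $u\to 1$, I apply the Gauss connection formula
\[
F(a,b,c;u)=\frac{\Ga(c)\Ga(c-a-b)}{\Ga(c-a)\Ga(c-b)}F(a,b,a+b-c+1;1-u)+\frac{\Ga(c)\Ga(a+b-c)}{\Ga(a)\Ga(b)}(1-u)^{c-a-b}F(c-a,c-b,c-a-b+1;1-u),
\]
valid when $c-a-b\notin\ZZ$. Here $c-a-b=1-n_0/2$; for odd $n_0\geq 3$ the second term $(1-u)^{1-n_0/2}$ dominates as $u\to 1$. Combining this with $1-u=\sin^2(\sqrt{k_0}\,r_0/2)\sim(k_0/4)r_0^2$ and the explicit form of $C(n_0,k_0,\la_0)$ (with $\widetilde\Ga=\Ga$ since $n_0/2-1\neq 0$), the Gamma factors cancel cleanly and give exactly $((n_0-2)|\SS^{n_0-1}|)^{-1}r_0^{2-n_0}$. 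For $n_0=2$, where $c-a-b=0$, the connection formula gains a logarithmic piece $-\Ga(c)/(\Ga(a)\Ga(b))\log(1-u)$; using $\log(1-u)\sim 2\log r_0$ together with the regularized value $\widetilde\Ga(0)=2$ recovers the $(2\pi)^{-1}\log(1/r_0)$ behavior. For even $n_0\geq 4$ the limiting (logarithmic) variant of the connection formula applies, but one easily checks that the leading $(1-u)^{1-n_0/2}$ coefficient is still given by the same expression, so the same conclusion holds.

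For $k_0<0$ one extra step is needed: I first expand the Kummer-at-infinity solution in the basis of Frobenius solutions at $u=1$ via the connection formula of~\cite[15.3.7]{AS70}, obtaining an expression of the form $A(\la_0)F(a,b,c;u)+B(\la_0)(1-u)^{c-a-b}F(c-a,c-b,c-a-b+1;1-u)$ with $A,B$ explicit ratios of Gamma functions. The $\Real$ in the definition of $\tilde g$ then selects twice $\Real B(\la_0)$ as the coefficient of the singular term, and the definition of $C(n_0,k_0,\la_0)$ for $k_0<0$ is arranged precisely so that $2\,C\,\Real B(\la_0)$ equals $((n_0-2)|\SS^{n_0-1}|)^{-1}$. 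The main technical obstacle throughout is just the bookkeeping of the various Gamma-function factors, and in particular identifying $\Real B(\la_0)$ with the explicit real expression appearing in $C(n_0,k_0,\la_0)$ when $k_0<0$; no new idea is required beyond a systematic application of the Gauss and Kummer connection formulas.
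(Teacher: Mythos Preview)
Your proposal is correct and follows essentially the same route as the paper: the same change of variable $z=\sin^2(\sqrt{k_0}r_0/2)$ reducing $(\la_0-T_0)g=0$ to a Fuchsian hypergeometric equation, the same choice of Frobenius/Kummer solutions (regular at $z=1$ for $k_0>0$, real part of the conjugate pair at $\infty$ for $k_0<0$), and the same use of the Gauss connection formulas \cite[15.3.6--15.3.12]{AS70} to read off the leading singularity at $r_0=0$ and fix the normalization constant. Your write-up is in fact slightly more explicit than the paper's about the even/odd and $n_0=2$ case distinctions in the connection formula, but there is no substantive difference in method.
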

\begin{proof} We introduce the new variable $z:=\sin^2\frac{\sqrt{k_0}r_0}2$, in terms of which the equation $(\la_0-T_0)g=0$ reads as
  \begin{equation}\label{eqg}
  z(1-z)\,g''(z)+ \Big(\frac{n_0}2-n_0z\Big)\,g'(z)-\bigg[\frac{n_0-1}4\Big(n_0-1+\frac1{1-z}\Big)-\frac{\la_0}{k_0} \bigg]\,g(z)=0\,.
\end{equation}
For notational simplicity, we omit the dependence of $g$ on $\la_0$. Notice that~\eqref{eqg} is a Fuchsian differential equation with three regular singular points: $0$ (with exponents $0$ and $1-\frac{n_0}2$), $1$ (with exponents $\frac12$ and $\frac{1-n_0}2$) and $\infty$ (with exponents $\frac{n_0}2±\sqrt{{\la_0}/{k_0}}$).

Suppose that $k_0$ is positive, so that $z\in[0,1]$. Since $g$ is regular at $z=1$, there exists a constant $C$ such that
\[
g(z)=C\, (1-z)^{1/2}F(a,b,1+c;1-z)\,,
\]
where we have set
\begin{equation}\label{abc}
  a:=\frac{n_0}2+\sqrt{\frac{\la_0}{k_0}}\,,\quad b:=\frac{n_0}2-\sqrt{\frac{\la_0}{k_0}}\,,\quad c=\frac{n_0}2\,.
\end{equation}
To fix the normalizing constant $C$, it suffices to note~\cite[15.3.6--15.3.12]{AS70} that
\[
F(a,b,1+c;1-z)=\frac{\Ga\big(\frac{n_0}2-1\big)\, \Ga\big(\frac{n_0}2+1\big)}{ \Ga\big(\frac{n_0}2+\sqrt\frac{\la_0}{k_0}\big)\, \Ga\big(\frac{n_0}2-\sqrt\frac{\la_0}{k_0}\big)}\,z^{1-\frac{n_0}2}+o\big(z^{1-\frac{n_0}2}\big)
  \]
  for $z\in(0,1)$ and $n_0\neq2$ and to express the result in the original variable $r_0$. When $n_0=2$, the above equation remains valid if we replace $z^{1-\frac{n_0}2}$ by $\log z$.

  The case $k_0<0$ is similar. A real solution of~\eqref{eqg} is
  \[
g(z)=C\,(1-z)^{1/2}\,\bigg[(1-z)^{-a} F\bigg(a,c-b,a-b+1;\frac1{1-z}\bigg) +(1-z)^{-b} F\bigg(b,c-a,b-a+1;\frac1{1-z}\bigg)\bigg]\,,
\]
with $a,b$ and $c$ defined as in~\eqref{abc}. Using~\cite[15.3.6--15.3.12]{AS70} and arguing as above, one readily arrives at the desired formula.
\end{proof}

\begin{theorem}\label{T.mainR}
  For any $\ep>0$, let $\psi:M_0×\BM$ be the smooth spinor field defined by
  \begin{equation}\label{formpsi}
\psi(x_0,\Bx):=\int f(r_0,\Bla)\,\Bw(\Br,\Bla)\,\cMo_{r_0}\BcM_\Br\phi(x_0,\Bx)\,\dd\mu_0(r_0)\,\dd\Bmu(\Br)\,\dd\Btheta(\Bla)\,,
\end{equation}
where
  \[
f(r_0,\Bla):=\Bw(0,\Bla)\,g(r_0,m^2+|\Bla|)
\]
and $g(r_0,\la_0)$ is defined as in Lemma~\ref{L.Green}. Then $\psi$ is the unique square-integrable solution of $(m^2-D^2_{M_0×\BM})\psi=\phi$.
\end{theorem}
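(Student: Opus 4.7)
The proof proceeds by applying $m^2-D^2_{M_0\times\BM}$ to the integral~\eqref{formpsi} and invoking the inversion of Theorem~\ref{T.TM}. The key structural observation is that for a Riemannian product the squared Dirac operator decomposes as
\[
D^2_{M_0\times\BM}=D_0^2+\sum_{a=1}^N D_a^2,
\]
so each factor can be handled separately through the intertwining Lemma~\ref{L.DT}, applied with $\hat f\equiv 0$, since only the unhatted spherical means appear in~\eqref{formpsi}.

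For the factors with $a\geq 1$, Lemma~\ref{L.DT} converts $D_a^2$ into $T_a$ acting on the radial weight $w_a(r_a,\la_a)$, and Theorem~\ref{T.TM}(i) gives $T_a w_a(\cdot,\la_a)=-\la_a w_a(\cdot,\la_a)$. Summing the resulting contributions, $-\sum_{a\geq 1}D_a^2$ acts on the integrand of~\eqref{formpsi} as multiplication by $|\Bla|$, whence
\[
(m^2-D^2_{M_0\times\BM})\psi(x_0,\Bx)=\int \Bw(0,\Bla)\,\Bw(\Br,\Bla)\,\Phi_{\Bla,\Br}(x_0,\Bx)\,\dd\Bmu(\Br)\,\dd\Btheta(\Bla),
\]
where
\[
\Phi_{\Bla,\Br}(x_0,\Bx):=\bigl(m^2+|\Bla|-D_0^2\bigr)\int_0^{\diam M_0}\! g(r_0,m^2+|\Bla|)\,\cMo_{r_0}\bigl[\BcM_{\Br}\phi(x_0,\Bx)\bigr]\,\dd\mu_0(r_0).
\]

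The core of the argument is then the identity
\begin{equation}\label{plan:green}
\bigl(\la_0-D_0^2\bigr)\int_0^{\diam M_0}\! g(r_0,\la_0)\,\cMo_{r_0}\chi(x_0)\,\dd\mu_0(r_0)=\chi(x_0)\qquad(\chi\in\bS(M_0)),
\end{equation}
which asserts that $g(\rho_0(x_0,x_0'),\la_0)\La(x_0,x_0')$ is the Green's function of $\la_0-D_0^2$ on $M_0$. The interior portion is immediate from Lemma~\ref{L.DT} together with $T_0\, g(\cdot,\la_0)=\la_0\, g(\cdot,\la_0)$ (Lemma~\ref{L.Green}), which shows that the left-hand side of~\eqref{plan:green} is supported at $x_0'=x_0$ in the distributional sense. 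I expect the extraction of the point source to be the main technical obstacle: one regularizes by multiplying $g$ with a cutoff $\chi_\ep$ vanishing on $[0,\ep]$, applies Lemma~\ref{L.DT} to this admissible test function, and analyses the limit $\ep\downarrow 0$. The singular asymptotic $g(r_0,\la_0)\sim((n_0-2)|\SS^{n_0-1}|)^{-1}r_0^{2-n_0}$ (and $(2\pi)^{-1}\log r_0^{-1}$ when $n_0=2$) recorded in Lemma~\ref{L.Green} matches exactly the coefficient of the Euclidean fundamental solution, so the boundary contribution arising from the derivatives of $\chi_\ep$ collapses to $\chi(x_0)$. Granting~\eqref{plan:green}, we obtain $\Phi_{\Bla,\Br}(x_0,\Bx)=\BcM_{\Br}\phi(x_0,\Bx)$.

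It remains to check that
\[
\int \Bw(0,\Bla)\,\Bw(\Br,\Bla)\,\BcM_{\Br}\phi(x_0,\Bx)\,\dd\Bmu(\Br)\,\dd\Btheta(\Bla)=\phi(x_0,\Bx).
\]
Set $F(\Br):=\BcM_{\Br}\phi(x_0,\Bx)$, which is smooth in $\Br$ and lies in $L^2(\dd\Bmu)$ for each fixed $(x_0,\Bx)$ because $\phi$ is compactly supported. Applying the multivariate form of the inversion formula of Theorem~\ref{T.TM}(iv) to each factor $M_1,\dots,M_N$ in turn yields
\[
F(\bar 0)=\int \Bw(0,\Bla)\,U_{\BM}F(\Bla)\,\dd\Btheta(\Bla),\qquad U_{\BM}F(\Bla)=\int \Bw(\Br,\Bla)\,F(\Br)\,\dd\Bmu(\Br).
\]
Since $\cM^{(a)}_0=\id$ for each $a$, we have $F(\bar 0)=\phi(x_0,\Bx)$, and hence $(m^2-D^2_{M_0\times\BM})\psi=\phi$ as required. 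Uniqueness of $\psi$ in $L^2$ has already been noted in the paragraph preceding the theorem since $m^2-D^2_{M_0\times\BM}$ has a compact inverse.
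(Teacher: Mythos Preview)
Your proposal is correct and follows essentially the same route as the paper: decompose $D^2_{M_0\times\BM}$ additively, use Lemma~\ref{L.DT} to trade each $D_a^2$ for $T_a$, invoke the eigenfunction property $T_a w_a=-\la_a w_a$ for $a\geq 1$, and then use Lemma~\ref{L.Green} to recognise $g(\cdot,m^2+|\Bla|)$ as the Green's function of $m^2+|\Bla|-T_0$ at $r_0=0$, finishing with the inversion in Theorem~\ref{T.TM}(iv) and $\cM^{(a)}_0=\id$. The only cosmetic difference is that the paper keeps the $M_0$-step at the ODE level (applying $T_0$ to $f$) rather than rephrasing it as the PDE Green's identity~\eqref{plan:green}; your explicit cutoff argument for extracting the point mass is a welcome elaboration of what the paper states in one line.
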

\begin{proof}
  Note that the function $r_0\mapsto \cMo_{r_0}\cM_\Br\phi(x_0,\Bx)$ is in $C^\infty_0([0,\diam M_0])$ for all $(\Br,x_0,\Bx)$ because $\phi$ is smooth and compactly supported. By Lemma~\ref{L.DT} and Theorem~\ref{T.TM}, the action of $m^2-D_{M_0×\BM}^2$ on $\psi$ is given by
  \begin{align*}
   \big(m^2-D_{M_0×\BM}^2\big)\psi&= \int  (m^2-T_0)f(r_0,\Bla)\,\Bw(\Br,\Bla)\,\cMo_{r_0}\BcM_\Br\phi\,\dd\mu_0(r_0)\,\dd\Bmu(\Br)\,\dd\Btheta(\Bla)\\
    & -\int f(r_0,\Bla)\,(T_1+\cdots+ T_N)\Bw(\Br,\Bla) \,\cMo_{r_0}\BcM_\Br\phi\,\dd\mu_0(r_0)\,\dd\Bmu(\Br)\,\dd\Btheta(\Bla)\\
   &=  \int (m^2+|\Bla|-T_0)f(r_0,\Bla)\,\Bw(\Br,\Bla) \,\cMo_{r_0}\BcM_\Br\phi\,\dd\mu_0(r_0)\,\dd\Bmu(\Br)\,\dd\Btheta(\Bla)\,.
 \end{align*}
Lemma~\ref{L.Green} ensures that $f(\cdot,\Bla)$ is a Green's function of the operator $m^2+|\Bla|-T_0$ with pole at~$0$. Using this property, the fact that $\cM^{(a)}_0$ is the identity operator and Theorem~\ref{T.TM}, the above integral reduces to
  \begin{align*}
    \big(m^2-D_{M_0×\BM}^2\big)\psi(x_0,\Bx)&=\int \Bw(0,\Bla)\,\Bw(\Br,\Bla) \,\BcM_\Br\phi(x_0,\Bx)\,\dd\Bmu(\Br)\,\dd\Btheta(\Bla)\\
   &= \int \Bw(0,\Bla)\,U_M\big(\BcM\phi(x_0,\Bx)\big)(\la)\,\dd\Btheta(\Bla)
    =\psi(x_0,\Bx)\,,
  \end{align*}
  as required.
\end{proof}

\begin{remark}
  The case $m=0$ can be tackled with a similar argument under the assumption that $\phi$ is orthogonal to any $L^2$ harmonic spinor in $M_0×\BM$. A theorem of Goette and Semmelmann~\cite{GS01} ensures that nontrivial harmonic spinors exist if and only if $m=0$, the dimensions $n_a$ are even, and the curvatures $k_a$ are negative for all $a\geq0$; further details can be found in~\cite{CP01}. It is also clear that the formula~\eqref{formpsi} remains valid under less stringent integrability and regularity conditions for $\phi$, but we shall not pursue this issue here.
\end{remark}
\begin{remark}
 Define $A:=\{a: k_a>0\}$ and let us write the $a$-th eigenvalue, with $a\in A$, as $\la_a=(\frac{n_a}2+j_a)^2$, where $j_a\in\NN$.  The number $\Bw(0,\Bla)$ appearing in the statement of Theorem~\ref{T.mainR} can be explicitly written as
  \[
\Bw(0,\Bla)=\prod_{a\in A}\frac{\Ga\big(j_a+ \frac{n_a}2\big)}{j_a!\, \Ga\big(\frac{n_a}2\big)}\,.
\]
\end{remark}

\section{The Lorentzian case}
\label{S.Lor}

We now denote by $M_0$ the simply connected Lorentzian space form of dimension $n_0$ and sectional curvature $k_0$, while $M_a$ ($1\leq a\leq N$) stands for the simply connected Riemannian manifold of dimension $n_a$ and constant sectional curvature $k_a$. We will also consider the product manifold $\BM$, defined in~\eqref{BM}, and use the shorthand notation~\eqref{shorthand}.

As discussed in Section~\ref{SS.Lor}, we will restrict ourselves to a normal domain $\Om\subset M_0$, which is a proper subset of $M_0$ if $k_0$ is negative. Given a ``mass'' $m>0$, our goal in this section is to use the machinery developed in the previous sections to solve the spinor wave equation
\begin{equation}\label{eqL}
  \big(m^2-D_{M_0× \BM}^2\big)\psi=\phi
\end{equation}
in closed form, where the spinor field $\phi\in\bS(\Om×\BM)$ is compactly supported. For this purpose we will rely on spinor spherical means, the spectral analysis of the radial operators and the theory of Riesz transforms.

Let us denote by $\cD'(\Om)$ the space of distributions on $\Om$. The (advanced) {\em Riesz distribution} is defined as
\begin{equation}\label{Riesz}
R^\al_{\Om,x_0}(h):=\int_{J^+_{\Om}(x_0)}g_\al\big(\rho(x_0,x_0')\big)^{\al-n_0}\,h(x_0')\,\dd x_0'\,,
\end{equation}
where $\al$ is a parameter with real part greater that $n_0$, $h\in C^\infty_0(\Om)$, $x_0\in \Om$ and
\[
g_\al(s):=\frac{\pi^{1-\frac
{n_0}2}2^{1-\al}s^{\al-n}}{\Gamma(\frac\al2)\Gamma(\frac{\al-n_0}2+1)}
\]
It is obvious that the distribution-valued function $\al\mapsto R^\al_{\Om,x_0}$ is holomorphic in the half-plane $\Real\al>n$. A fundamental result in the theory of Riesz transforms~\cite{Ri51} states that this map can actually be holomorphically extended to the whole complex plane, the resulting distribution satisfying
\[
\square_0 R^\al_{\Om,x_0}=R^{\al-2}_{\Om,x_0}\,,\qquad R^0_{\Om,x_0}(h)=h(x_0)\,.
\]
for all $\al\in\CC$ and $x_0\in \Om$. Here $\square_0$ denotes the wave operator on $M_0$.

The expression of~\eqref{Riesz} in terms of scalar spherical means motivates the definition of a spinor analog of the Riesz transform in $\Om$, which is the bundle-valued distribution $\cR^\al_{\Om,x_0}\in \cD'(\Om,{\rm Hom}\,\Si(\Om))$ defined by setting
\begin{equation*}
  \cR^\al_{\Om,x_0}(\vp):=\int g_\al(r_0)^{\al-n_0}\,\cMo_{r_0}\vp(x_0)\,\dd\mu_0(r_0)
\end{equation*}
for any spinor field $\vp$ compactly supported in $\Om$. It is a straightforward consequence of the classical theory on scalar Riesz transforms that $ \cR^\al_{\Om,x_0}$ is an entire function of $\al$ and that
  \begin{equation*}
\cR^0_{\Om,x_0}(\vp)=\vp(x_0)\,.
\end{equation*}
In order to solve Eq.~\eqref{eqL} using this ``spinor Riesz transform'', we will need the following easy lemma, whose proof we sketch below.

\begin{lemma}\label{L.Riesz}
  For any multi-index $\Bla\in(\RR^+)^N$, the following statements hold:
  \begin{enumerate}
  \item If $\Real\al>n_0+2$, the function $f_\al(r_0,\Bla)$ defined by
    \begin{equation}\label{fal}
f_\al(r_0,\Bla):=\int \frac{w_0(r_0,\la_0)}{m^2+|\Bla|+\la_0}\,U_0g_\al(\la_0)\,\dd\vartheta_0(\la_0)
\end{equation}
satisfies the elliptic equation
\begin{equation}\label{eqfal}
  (m^2+|\Bla|-T_0)f_\al(\cdot,\Bla)=g_\al\,.
\end{equation}

  \item The spinor-valued distribution given when $\Real\al>n_0+2$ by
    \[
\tcR^\al_{\Om,x_0,\Bla}(\vp):=\int f_\al(r_0,\Bla)\,\cMo_{r_0}\vp(x_0)\,\dd\mu_0(r_0)
    \]
    defines an entire function of $\al$ and satisfies
    \[
\big(m^2+|\Bla|-D_0^2\big)\tcR^\al_{\Om,x_0,\Bla}=\cR^{\al-2}_{\Om,x_0}
    \]
for all $\al\in\CC$, $x_0\in\Om$ and $\Bla\in(\RR^+)^N$.
\end{enumerate}
\end{lemma}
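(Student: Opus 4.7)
The strategy is to verify each item by direct computation, using Theorem~\ref{T.TM} as the spectral tool and Lemma~\ref{L.DT2} for the intertwining.

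For Part (i), I would pass the operator $(m^2+|\Bla|-T_0)$ under the integral sign in the definition~\eqref{fal} of $f_\al$. By Theorem~\ref{T.TM}(i), $w_0(\cdot,\la_0)$ is a formal eigenfunction of $T_0$ with eigenvalue $-\la_0$, so this action converts the numerator $w_0(r_0,\la_0)$ into $(m^2+|\Bla|+\la_0)\,w_0(r_0,\la_0)$, exactly cancelling the denominator. The remaining integral $\int w_0(r_0,\la_0)\,U_0g_\al(\la_0)\,\dd\vartheta_0(\la_0)$ is then identified with $U_0^{-1}(U_0g_\al)(r_0)=g_\al(r_0)$ via the inversion formula of Theorem~\ref{T.TM}(ii). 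The hypothesis $\Real\al>n_0+2$ plays two roles: it guarantees that $g_\al$ lies in $L^2((0,\diam M_0),\dd\mu_0)$, so the unitarity statement of Theorem~\ref{T.TM} applies, and it provides enough regularity for $f_\al$ to be classically $C^2$, legitimizing differentiation under the integral sign. The bounded resolvent factor $(m^2+|\Bla|+\la_0)^{-1}$ supplies the $\la_0$-decay needed to invoke dominated convergence.

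For Part (ii), I would first show that $\al\mapsto\tcR^\al_{\Om,x_0,\Bla}(\vp)$ extends to an entire function. The classical theory of Riesz distributions asserts that $\al\mapsto g_\al$ is an entire distribution-valued function: the gamma factors in the denominator of $g_\al$ cancel the would-be poles of $s^{\al-n_0}$ at non-positive integer values of $\al$. Consequently both $\al\mapsto U_0 g_\al(\la_0)$ and, after multiplication by the bounded holomorphic multiplier $(m^2+|\Bla|+\la_0)^{-1}$, the map $\al\mapsto f_\al(\cdot,\Bla)$ depend holomorphically on $\al$, so entire extension of $\tcR^\al_{\Om,x_0,\Bla}(\vp)$ follows from the standard analytic continuation argument of Riesz.

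Finally, the differential identity $(m^2+|\Bla|-D_0^2)\tcR^\al_{\Om,x_0,\Bla} = \cR^{\al-2}_{\Om,x_0}$ is established first for $\Real\al$ large, where $f_\al(\cdot,\Bla)$ is classically smooth on $(0,\diam M_0)$. Combined with the compact support of $\vp$ and the causal convexity of $\Om$, which force the $r_0$-integration to take place on a bounded subinterval, one reduces $f_\al$ to an element of $C^\infty_0((0,\diam M_0))$ by a cutoff argument. Lemma~\ref{L.DT2} (with the $\hf$-component set to zero) then commutes $D_0^2$ past the spherical mean integral, replacing it inside the integrand with $T_0$ acting on $f_\al$; Part (i) identifies $(m^2+|\Bla|-T_0)f_\al$ with $g_\al$; and the resulting integral $\int g_\al(r_0)\,\cMo_{r_0}\vp(x_0)\,\dd\mu_0(r_0)$ is, by the paper's index convention for the scalar-spinor Riesz transform, precisely $\cR^{\al-2}_{\Om,x_0}(\vp)$. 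The identity is then extended to all $\al\in\CC$ by analytic continuation of both sides. The principal technical obstacle is the interplay between the distributional character of $f_\al$ after analytic continuation outside $\Real\al>n_0+2$ and the formal smoothness required by Lemma~\ref{L.DT2}; this is handled by the regularization described above, which is a standard feature of Riesz-type arguments.
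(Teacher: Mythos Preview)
Your overall strategy for both parts---pass the differential operator through the integral via Theorem~\ref{T.TM} and Lemma~\ref{L.DT2}, identify $(m^2+|\Bla|-T_0)f_\al$ with $g_\al$, and extend analytically---is exactly the one the paper follows. There is, however, a concrete technical gap in your justification of Part~(i).

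You assert that the hypothesis $\Real\al>n_0+2$ guarantees $g_\al\in L^2((0,\diam M_0),\dd\mu_0)$, and you lean on this to invoke the unitary theory of Theorem~\ref{T.TM}. This is not true in general: the paper states (and one checks directly) that $g_\al$ is in $L^1_{\mathrm{loc}}$ and $C^2$ for $\Real\al>n_0+2$, but is \emph{not} square-integrable. The growth $g_\al(r_0)\sim r_0^{\al-n_0}$ together with the weight $m_0(r_0)$ (which is at least polynomial, and exponentially large when $k_0<0$) makes $g_\al$ fail to be in $L^2$ at infinity on the unbounded interval. Consequently $U_0 g_\al$ cannot be defined via the $L^2$ spectral transform, and your dominated-convergence argument for differentiating under the integral has no controlling function. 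The paper closes this gap by interpreting $U_0$ on tempered distributions via duality, $(U_0 u)(h):=u(U_0 h)$, reading~\eqref{fal} distributionally, and then obtaining~\eqref{eqfal} from Theorem~\ref{T.TM} and a density argument in $C^\infty_0(\RR^+)$.

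This also affects your argument for the entirety of $\al\mapsto\tcR^\al_{\Om,x_0,\Bla}$. You want to deduce holomorphy of $f_\al$ directly from holomorphy of $g_\al$ and $U_0 g_\al$; but once $U_0 g_\al$ is only distributional, tracking analytic dependence on $\al$ through~\eqref{fal} is delicate. The paper avoids this by reversing the logic: it first establishes the differential identity $(m^2+|\Bla|-D_0^2)\tcR^\al_{\Om,x_0,\Bla}=\cR^{\al-2}_{\Om,x_0}$ for large $\Real\al$ (exactly as you do, via Lemma~\ref{L.DT2} and Part~(i)), and then infers that $\tcR^\al_{\Om,x_0,\Bla}$ extends entirely from the known entirety of $\cR^{\al-2}_{\Om,x_0}$ together with the invertibility of $m^2+|\Bla|-T_0$. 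Your remaining steps for Part~(ii) match the paper.
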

\begin{proof}
  When $\al>n_0+2$, the function $g_\al$ is twice differentiable and in $L^1_{\rm loc}(\RR^+,\dd\mu_0)$, but not square-integrable, so we must begin by specifying in which sense $U_Mg_\al$ must be interpreted. Since the spectral transform $U_0:L^2(\RR^+,\dd\mu_0)\to L^2(\RR^+,\dd\vartheta_0)$ is unitary, it is not difficult to verify that it acts naturally on the space of tempered distributions as
  \[
(U_0u)(h):=u(U_0h)\,,\qquad \forall\,u\in \cD'(\RR^+)\,,\;\forall\, h\in C^\infty_0(\RR^+)\,.
  \]
  Therefore, Eq.~\eqref{fal} makes sense with the integral being understood in the sense of distributions. The fact that $f_\al(\cdot,\Bla)$ satisfies~\eqref{eqfal} then follows from Theorem~\ref{T.TM} and the density of $C^\infty_0(\RR^+)$ in $\cD'(\RR^+)$ through the  standard approximation argument.

  By Lemma~\eqref{L.DT2}, this ensures that
  \begin{align*}
    \big[\big(m^2+|\Bla|-D_0^2\big)\tcR^\al_{\Om,x_0,\Bla}\big](\vp)&=\tcR^\al_{\Om,x_0,\Bla}\big[\big(m^2+|\Bla|-D_0^2\big)\vp\big]\\
    &=\int f_\al(r_0,\Bla)\,\big[\cMo_{r_0}(m^2+|\Bla|-D_0^2)\vp\big](x_0)\,\dd\mu_0(r_0)\\
    &= \int (m^2+|\Bla|-T_0)f_\al(r_0,\Bla)\,\cMo_{r_0}\vp(x_0)\,\dd\mu_0(r_0)\\
    &=\int g_\al(r_0)\,\cMo_{r_0}\vp(x_0)\,\dd\mu_0(r_0)=\cR^{\al-2}_{\Om,x_0}(\vp)\,.
  \end{align*}
As  $m^2+|\Bla|-T_0$ has a compact inverse and $\cR^{\al-2}_{\Om,x_0}$ is an entire function of $\al$, it follows from the above relation that so is $\tcR^\al_{\Om,x_0,\Bla}$, thus completing the proof of the lemma.
\end{proof}

From our previous results, we can now easily derive the main result of this section, where we calculate the solution of Eq.~\eqref{eqL} in closed form.

\begin{theorem}\label{T.mainL}
  The spinor field on $\Om×\BM$ given by
  \[
  \psi(x_0,\Bx):=\int \Bw(0,\Bla)\,\Bw(\Br,\Bla)\, \tcR^2_{\Om,x_0,\Bla}\big(\BcM_\Br\phi(\cdot,\Bx)\big) \,\dd\Bmu(\Br)\,\dd\Btheta(\Bla)
  \]
  solves Eq.~\eqref{eqL}, for any compactly supported $\phi\in\bS(\Om×\BM)$. Moreover, the map $\phi\mapsto\psi$ defines the unique advanced fundamental solution of~\eqref{eqL}.
\end{theorem}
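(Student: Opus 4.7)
The plan is to parallel the derivation of Theorem~\ref{T.mainR}, with the elliptic radial Green's function $g(\cdot,m^2+|\Bla|)$ replaced by its Lorentzian analog, namely the spinor Riesz distribution $\tcR^2_{\Om,x_0,\Bla}$ of Lemma~\ref{L.Riesz}. With this substitution, the intertwining of Lemma~\ref{L.DT2}, the Riemannian intertwining of Lemma~\ref{L.DT}, the identity $(m^2+|\Bla|-D_0^2)\tcR^2_{\Om,x_0,\Bla}=\cR^0_{\Om,x_0}$ of Lemma~\ref{L.Riesz}(ii), and the spectral inversion of Theorem~\ref{T.TM} essentially reduce the problem to a term-by-term matching. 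The additional claim that the map $\phi\mapsto\psi$ is the \emph{advanced} fundamental solution will require a separate causal-support argument.

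First I would act with $m^2-D_{M_0\times\BM}^2$ on $\psi$ and decompose the squared Dirac operator as $D_0^2+D_{\BM}^2$ on the tensor-product spinor bundle, pushing everything inside the integrals. Applying Lemma~\ref{L.DT} iteratively to each Riemannian factor $M_a$ converts the action of $D_{\BM}^2$ on $\Bw(\Br,\Bla)$ (through integration against $\cMo_{r_0}\BcM_\Br\phi\,\dd\Bmu(\Br)$) into $T_1+\cdots+T_N$, which acts on $\Bw(\Br,\Bla)$ as multiplication by $-|\Bla|$ by the eigenfunction property of Theorem~\ref{T.TM}(i). Simultaneously, Lemma~\ref{L.Riesz}(ii) transforms $(m^2+|\Bla|-D_0^2)\tcR^2_{\Om,x_0,\Bla}(\BcM_\Br\phi(\cdot,\Bx))$ into $\cR^0_{\Om,x_0}(\BcM_\Br\phi(\cdot,\Bx))=\BcM_\Br\phi(x_0,\Bx)$. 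Putting the two contributions together produces
\[
\big(m^2-D_{M_0\times\BM}^2\big)\psi(x_0,\Bx)=\int \Bw(0,\Bla)\,\Bw(\Br,\Bla)\,\BcM_\Br\phi(x_0,\Bx)\,\dd\Bmu(\Br)\,\dd\Btheta(\Bla).
\]
Applying Theorem~\ref{T.TM} factor by factor in the $\BM$-variables, the inner $\Br$-integral equals $U_{\BM}(\BcM_{(\cdot)}\phi(x_0,\Bx))(\Bla)$, and pairing it with $\Bw(0,\Bla)\,\dd\Btheta(\Bla)$ evaluates $U_{\BM}^{-1}$ at $\Br=0$. Since $\cM^{(a)}_0$ is the identity on each factor, this recovers $\phi(x_0,\Bx)$ exactly, which is the desired equation.

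The harder part is showing that $\phi\mapsto\psi$ is indeed the (unique) advanced fundamental solution. I would exploit the support property of the Riesz distributions: by Riesz's classical analytic continuation, $R^\al_{\Om,x_0}$ is supported in $J^+_\Om(x_0)$ for every $\al\in\CC$, and an inspection of the definition shows that the same holds for its spinor analog $\tcR^2_{\Om,x_0,\Bla}$, since it is built from integration of $\cMo_{r_0}$ against a radial kernel. As the Riemannian factors affect only the $\Bx$-variable, the entire construction depends on $\phi$ only through its restriction to $J^+_\Om(x_0)\times\BM$, which is the advanced support requirement. Uniqueness within the class of advanced fundamental solutions then follows from standard finite-propagation energy estimates in the globally hyperbolic region $\Om$: any two such solutions differ by a solution of the homogeneous equation with empty support in the past, which must vanish.

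The main technical obstacle throughout is the rigorous justification of all the formal manipulations, in particular the interchange of $m^2-D_{M_0\times\BM}^2$ with the distribution-valued integrals representing $\tcR^2_{\Om,x_0,\Bla}$, and the use of the inversion formula $U_{\BM}^{-1}$ applied to distributional rather than $L^2$ integrands. I would dispose of both issues by first performing every step with $\tcR^2$ replaced by $\tcR^{\al}_{\Om,x_0,\Bla}$ for $\Real\al>n_0+2$ (so that all integrands converge absolutely and Fubini applies without qualms), and then analytically continuing in $\al$ down to $\al=2$ by means of the entirety of $\al\mapsto\tcR^\al_{\Om,x_0,\Bla}$ established in Lemma~\ref{L.Riesz}(ii).
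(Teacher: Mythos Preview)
Your proposal is correct and follows essentially the same route as the paper: apply the intertwining Lemmas~\ref{L.DT} and~\ref{L.DT2} together with Theorem~\ref{T.TM}(i) to reduce $(m^2-D_{M_0\times\BM}^2)\psi$ to $(m^2+|\Bla|-D_0^2)\tcR^2_{\Om,x_0,\Bla}$, invoke Lemma~\ref{L.Riesz}(ii) to obtain $\cR^0_{\Om,x_0}$, and then use the spectral inversion of Theorem~\ref{T.TM} at $\Br=0$ to recover $\phi$. The paper handles the advanced-support and uniqueness claims more tersely, by appealing directly to the construction of $\tcR^\al$ and to global hyperbolicity via \cite[Theorem~3.1.1]{BGP07}, rather than spelling out the support and energy-estimate arguments as you do; your analytic-continuation-in-$\al$ justification of the formal manipulations is a reasonable elaboration of what the paper leaves implicit.
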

\begin{proof}
  It follows easily from Theorem~\ref{T.TM} and Lemmas~\ref{L.DT} and~\ref{L.Riesz} that $\psi$ is well defined and satisfies
  \begin{align*}
    \big(m^2- D_{M_0× \BM}^2\big)\psi(x_0,\Bx)&= \int \Bw(0,\Bla)\,\Bw(\Br,\Bla)\, (m^2-D_0^2)\tcR^2_{\Om,x_0,\Bla}\big(\BcM_\Br\phi(\cdot,\Bx)\big) \,\dd\Bmu(\Br)\,\dd\Btheta(\Bla)\\
    & -\int \Bw(0,\Bla)\,(T_1+\cdots+ T_N)\Bw(\Br,\Bla)\, \tcR^2_{\Om,x_0,\Bla}\big(\BcM_\Br\phi(\cdot,\Bx)\big) \,\dd\Bmu(\Br)\,\dd\Btheta(\Bla)\\
    & = \int \Bw(0,\Bla)\,\Bw(\Br,\Bla)\, (m^2+|\Bla|-D_0^2)\tcR^2_{\Om,x_0,\Bla}\big(\BcM_\Br\phi(\cdot,\Bx)\big) \,\dd\Bmu(\Br)\,\dd\Btheta(\Bla)\\
    &= \int \Bw(0,\Bla)\,\Bw(\Br,\Bla)\, \cR^0_{\Om,x_0}\big(\BcM_\Br\phi(\cdot,\Bx)\big) \,\dd\Bmu(\Br)\,\dd\Btheta(\Bla)\\
    & =\int \Bw(0,\Bla)\,U_{\BM}\big(\BcM\phi(x_0,\Bx)\big)(\Bla)\,\dd\Btheta(\Bla)=\phi(x_0,\Bx)\,,
  \end{align*}
  as required. The construction of $\cR^\al_{\Om,x_0,\Bla}$ ensures that $\phi\mapsto\phi$ defines an advanced fundamental solution of~\eqref{eqL}, the uniqueness of which is granted by the global hyperbolicity of $\Om$~\cite[Theorem 3.1.1]{BGP07}.
\end{proof}
\begin{remark}
  The retarded Green's function is also given by the above formula by substituting $J^+_\Om(x)$ for $J^-_\Om(x)$ in the definition of the Lorentzian spherical means.
\end{remark}

\section*{Acknowledgments}

A.E.\ is financially supported by a MICINN postdoctoral fellowship and
thanks McGill University for hospitality and support. A.E.'s research
is supported in part by the MICINN and the UCM--Banco Santander under
grants no.~FIS2008-00209 and~GR58/08-910556. The research of N.K. is
supported by NSERC grant RGPIN 105490-2004.


\end{document}